\documentclass{article}



\usepackage{arxiv}
\usepackage[utf8]{inputenc} 
\usepackage[T1]{fontenc}    

\usepackage{hyperref}       

\usepackage{url}            

\usepackage{booktabs}       
\usepackage{amsfonts}       
\usepackage{nicefrac}       
\usepackage{microtype}      
\usepackage{fancyhdr}       
\usepackage{graphicx}       

\usepackage{textcomp}

\usepackage{amsmath}
\usepackage{amsthm}
\usepackage{xspace}


\newtheorem{definition}{Definition}
\newtheorem{theorem}{Theorem}
\newtheorem{corollary}{Corollary}



\newcommand{\FAB}{\textsc{FAB}\xspace}

\newcommand{\classP}{\textrm{P}}
\newcommand{\classNP}{\textrm{NP}}

\newcommand{\classPSPACE}{\textrm{PSPACE}}

\pagestyle{fancy}
\fancyhf{} 
\rhead{ \textit{} } 
\lhead{}
\cfoot{\thepage}



\title{Optimizing for aggressive-style strategies in Flesh and Blood is NP-hard}

\author{
  Leonardo Gasparini Romão \\
  Institute for Technological Research (IPT) \\
  São Paulo, Brazil\\
  \texttt{leonardo.romao@ensino.ipt.br} \\
   \AND
  Samuel Plaça de Paula \\
  Institute of Computing, State University of Campinas \\
  Campinas, São Paulo, Brazil\\
  \texttt{samuel.paula@alumni.usp.br} \\
  \And
  Eduardo Takeo Ueda \\
  Institute for Technological Research (IPT) \\
  São Paulo, Brazil\\
  \texttt{eduardoueda@ipt.br} \\
}

\begin{document}
\maketitle

\begin{abstract}
    Flesh and Blood™ (FAB) is a trading card game that two players need to make a strategy to reduce the life points of their opponent to zero. The mechanics of the game present complex decision-making scenarios of resource management. Due the similarity of other card games, the strategy of the game have scenários that can turn an NP-problem. This paper presents a model of an aggressive, single-turn strategy as a combinatorial optimization problem, termed the FAB problem. Using mathematical modeling, we demonstrate its equivalence to a 0-1 Knapsack problem, establishing the FAB problem as NP-hard. Additionally, an Integer Linear Programming (ILP) formulation is proposed to tackle real-world instances of the problem. By establishing the computational hardness of optimizing even relatively simple strategies, our work highlights the combinatorial complexity of the game.
\end{abstract}

\textbf{Keywords}: Computational complexity, NP-hardness, Knapsack Problem, Games, Combinatorial optimization

\section{Introduction}


Flesh and Blood™ (FAB) is a trading card game (TCG) developed by
Legend Story Studios® in 2019. Players assume the role of heroes with unique abilities and engage in battles where success relies on carefully balancing offense, defense, and resource generation.
The game's mechanics demand strategic planning, particularly for aggressive playstyles where players aim to maximize immediate damage output while managing limited resources.

In this paper, we work to explore the computational complexity of optimizing aggressive single-turn strategies in Flesh and Blood™. By modeling these strategies as a combinatorial optimization problem, termed the FAB problem, we establish its equivalence to the 0-1 Knapsack problem, proving that it is NP-hard. Furthermore, we present an Integer Linear Programming (ILP) formulation to address real-world instances of the problem. This study contributes to the field by advancing the understanding of computational challenges in modern card games and providing a foundation for developing heuristic and approximation algorithms tailored to strategic gameplay.

\subsection{Related works}

Before presenting our results, we very briefly summarize a few other findings that are related or similar to or own. Demaine investigated the computational complexity of the popular card game UNO~\cite{DEMAINE201451}. They define four different variants of the game, two of which being \textit{extensions} of UNO in the sense that they allow cards not present in traditional versions of the game. The authors investigate the hardness of determining the winning player from a given state of the match. This is shown to be NP-hard for two of the variants, while efficient algorithms are described for the other two.

Zhang explored the complexity of the digital card game Hearthstone, proving that in a format without randomness, the game is PSPACE-hard~\cite{Zhang2023}. We remind the reader that this implies NP-hardness while being possibly stronger since we could have $\classP \neq \classPSPACE$ even if it turns out that $\classP = \classNP$~\cite{Papadimitriou1994}.

Other games have been proven to be Turing-complete, such as Magic: The Gathering~\cite{Churchill2019}. Churchill et al. show that it is possible to set up a game state that goes on to simulate an Universal Turing Machine purely as a consequence of the game’s rules. Consequently, the game itself is Turing-complete, and specific problems related to it such as determining the winning player from a given state are undecidable rather than merely hard. This is not the first game to be shown to have this property, though to the best of our knowledge it is the only card game so far. We remark that a problem being undecidable in the general case does not necessarily mean that \textit{every} instance is impossible (or hard) to solve; the result is rather is a reflection of the potential complexity allowed by the system.

\subsection{Our contribution}
\label{sec:contribution}

We propose a mathematical programming model of Flesh and Blood, described in Section~\ref{sec:model}. Specifically, it is an Integer Linear Program (ILP), meaning it has linear restrictions and objective function, with variables restricted to being integer. Our model comprises a single-turn decision, with the restrictions enforcing the game rules. We provide a family of possible objective functions, and prove all of them to be NP-hard. This is done by proving that the simplest strategy allowed by the family of objective functions is already NP-hard, while the others are generalizations of it. All of this is described in Section~\ref{sec:hardness}.

Our model comprises a simple single-turn decision which only considers the player's own cards to decide how to play. However simple, since this is already NP-hard, it implies that any more robust models which extend it are also at least NP-hard.

Finally, we note that since we formulate the game as an ILP, the model itself is another product of this work. The ILP program immediately provides one possible way of solving instances of the problem, by running an appropriate solver. This is an aspect of our contribution that we do not explore here, leaving it as a possibility for future work. One could also consider Dynamic Programming as a possible way of deriving pseudo-polynomial algorithms, like the one for 0-1 Knapsack.

\section{The Game and Its Main Mechanics}

Here we give a simplified description of the game and its rules, only enough to motivate the model in the following Section~\ref{sec:model}. The main mechanics of Flesh and Blood revolve around some key concepts: hero selection, resource management, combat, and deck building. For more details, the reader can refer to our Appendix~\ref{rules} and the official documentation for the game~\cite{LegendStoryStudios2023}.

Players begin by placing their hero and equipment cards in the play area. The hero is placed at the center of the player's play area, with equipment distributed alongside. Then, each player adjusts their life point counter according to the chosen hero, which can vary for each character. The determination of the first player is done by a random method, such as flipping a coin, rolling a die, or according to the specific rules of the tournament or event they are participating in. After that, each player shuffles their deck and draws an initial hand of cards, usually composed of four cards, thus completing the battlefield preparation phase.

\begin{figure}[h]
    \centering
    \includegraphics[width=0.5\linewidth]{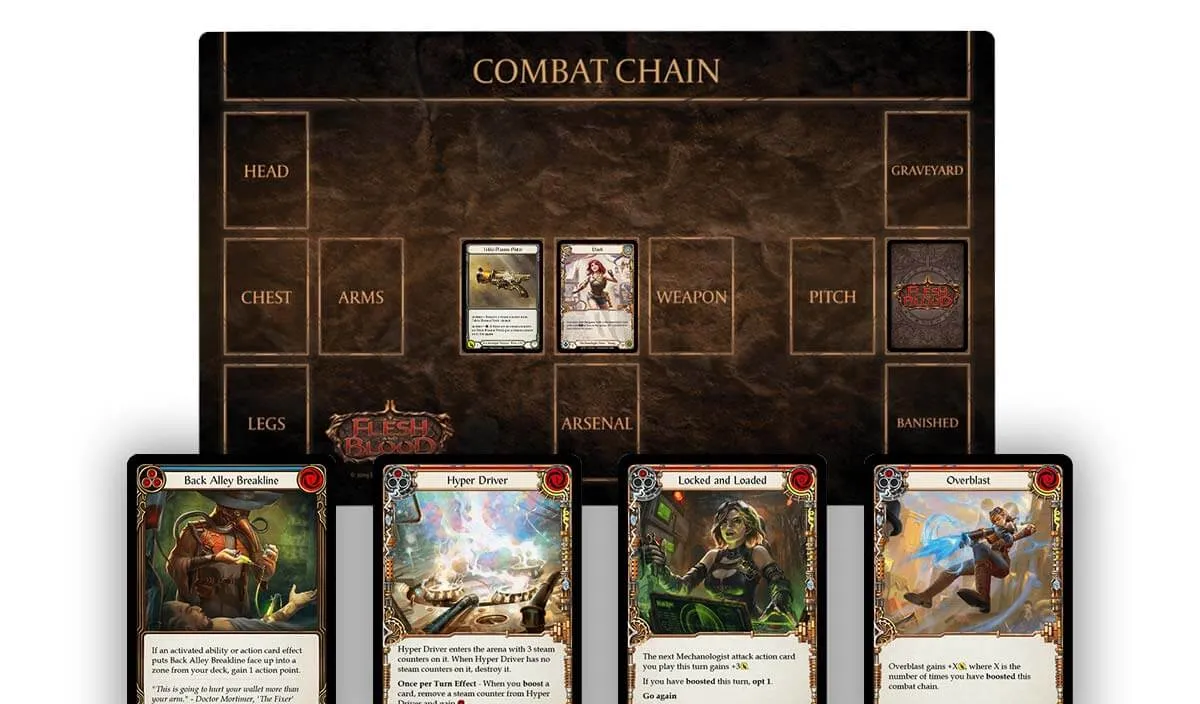}
    \caption{Game state during a turn in Flesh and Blood. Source: \cite{Zach2023}}
    \label{fig:game_state}
\end{figure}

\section{Modeling}
\label{sec:model}

As briefly described before in Section~\ref{sec:contribution}, we model the game as an Integer Linear Program. It represents the choices made by the player in one turn. The restrictions model the game rules, and the objective function will model the type of strategy.

In the next subsections, we describe how the problem of optimizing aggressive strategies in FAB is translated into a mathematical framework. We introduce the program variables and notations used to represent game elements, such as card attributes and player actions. Additionally, we explain how these elements are integrated into a combinatorial optimization problem, enabling a systematic analysis of decision-making within the game.

\subsection{Notations and Variables}

Here, we provide the definitions and notations used in our mathematical model in order to represent different elements of the game. For each element, we give a brief description of the element and provide our notation. In addition, some elements of the Mathematical Programming modeling will start to appear, though we reserve Section~\ref{sec:of} for the explicit models. In each subsection, we present a different aspect of the game that is a variable of the problem.

\subsubsection{Cards}

\paragraph{Definition:} In \FAB, a card \( c_i \) represents an action that a player can perform, such as attacking, defending, or using a special ability.
\paragraph{Notation:}
    \begin{itemize}
        \item \( \mathcal{C} = \{ c_1, c_2, \dots, c_n \} \) represents the set of cards available in a turn.
        \item Each card \( c_i \) has:
        \begin{itemize}
            \item \textbf{Attack value} \( a_i \): represents the damage the card can deal to the opponent.
            \item \textbf{Pitch cost} \( t_i \): represents the number of resources required to play the card.
            \item \textbf{Pitch resource} \( r_i \): represents the number of resources generated when using the card for pitching.
            \item \textbf{Defense value} \( d_i \): defensive capability of the card when kept in hand and used to defend.
        \end{itemize}
    \end{itemize}

\subsubsection{Pitch}

\paragraph{Definition:} Pitch is the resource used to pay the cost of playing cards in \FAB. Each card has a pitch value \( r_i \), which indicates how many resources it generates when used for pitching.
\paragraph{Notation:}
    \begin{itemize}
        \item \( T = \sum_{i \in S} t_i \): sum of the pitch costs of the cards played during a turn, where \( S \subseteq \mathcal{C} \) is the subset of cards played.
        \item \( R = \sum_{i \in P} r_i \): sum of the pitch resources generated by the cards used during a turn, where \( P \subseteq \mathcal{C} \) is the subset of cards used for pitching.
    \end{itemize}

\subsubsection{Attack Value}

\paragraph{Definition:} The attack value \( a_i \) of a card is the damage that the card can deal to the opponent when played.

\paragraph{Notation:}
    \begin{itemize}
        \item \( A = \sum_{i \in S} a_i \): sum of the attack values of the cards played during a turn.
    \end{itemize}

\subsubsection{Cards Kept for Defense}

\paragraph{Definition:} Cards not used for attack or pitch can be used for defense.
\paragraph{Notation:}
    \begin{itemize}
        \item \( D = \mathcal{C} \setminus (S \cup P) \): set of cards kept for defense.
        \item Total defense value available:
        \[
        D_{\text{total}} = \sum_{i \in D} d_i.
        \]
    \end{itemize}

To illustrate how cards work in the game, Figure~\ref{fig:card_example} shows how a card is printed with the items above. Next, we present what actions the player can do considering the variations of the type of cards in his hand. To formalize the player's actions in a game of Flesh and Blood, we create a mathematical formalization of the components. We assume that we are the first player to perform actions so that there is no intervention from the other player's actions.

\begin{figure}[h]
    \centering
    \includegraphics[width=0.5\linewidth]{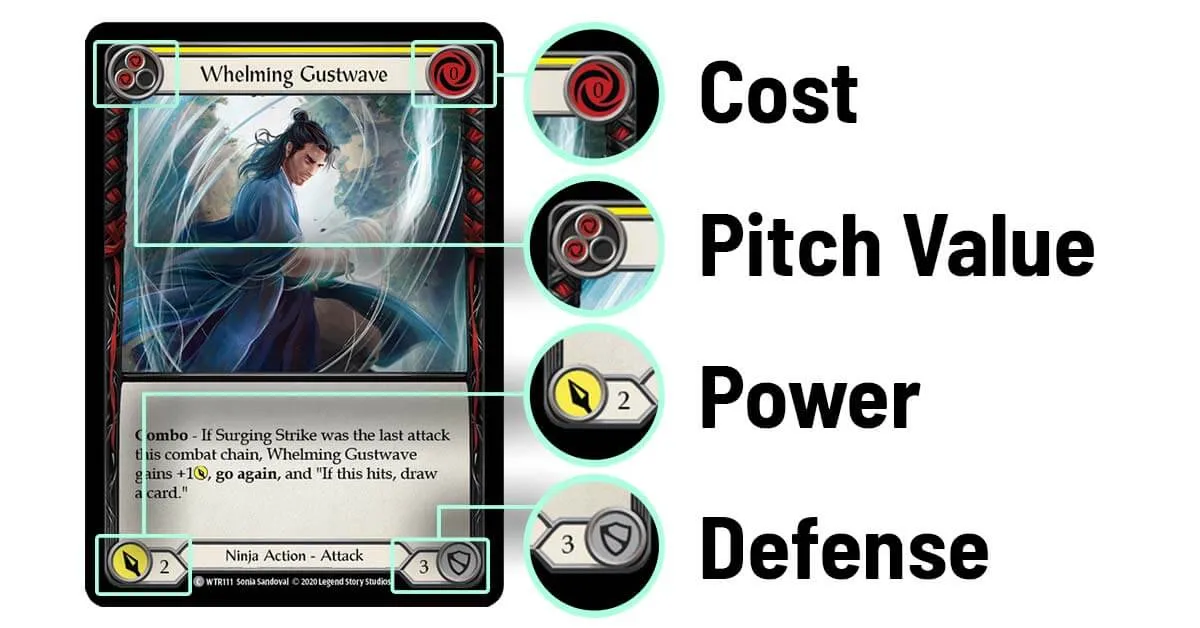}
    \caption{Example of a card in Flesh and Blood. Source: \cite{Zach2023}}
    \label{fig:card_example}
\end{figure}

\subsubsection{Player Actions}
\paragraph{Pitching to Generate Resources}:
    \begin{itemize}
        \item The player selects the set of cards \( P \subseteq \mathcal{C} \) to generate resources.
        \item The total resources generated are \( R = \sum_{i \in P} r_i \).
    \end{itemize}

\paragraph{Selection of Attack Cards}:
    \begin{itemize}
        \item The player selects the set of cards \( S \subseteq \mathcal{C} \) to attack.
        \item For each card \( c_i \in \mathcal{C} \):
        \begin{itemize}
            \item \( x_i \in \{0,1\} \): indicates whether the card \( c_i \) is played as an attack (\( x_i = 1 \)) or not (\( x_i = 0 \)).
            \item \( y_i \in \{0,1\} \): indicates whether the card \( c_i \) is used for pitch (\( y_i = 1 \)) or not (\( y_i = 0 \)).
            \item \( z_i \in \{0,1\} \): indicates whether the card \( c_i \) is kept for defense (\( z_i = 1 \)) or not (\( z_i = 0 \)).
        \end{itemize}
        \item Restricted to use only one time:
        \[
        x_i + y_i + z_i = 1, \quad \forall i \in \mathcal{C}.
        \]
    \end{itemize}

\paragraph{Executing Attacks}:
    \begin{itemize}
        \item The player plays the attack cards \( S \) and pays the cost in resources:
        \[
        \sum_{i \in S} t_i \leq R.
        \]
        \item The damage dealt to the opponent is:
        \[
        A = \sum_{i \in S} a_i.
        \]
    \end{itemize}

\subsection{Objective Functions and Hardness}

This section focuses on formalizing the FAB problem as a computational optimization challenge. It presents the primary objective functions for maximizing offensive output and balancing attack with defense, depending on the playstyle. Furthermore, it establishes the NP-hardness of the problem by demonstrating its equivalence to the 0-1 Knapsack problem. Variants of the FAB problem, such as FAB-AGGRO and FAB-MIDRANGE, are introduced to explore specific strategic contexts.

\label{sec:of}
\subsubsection{Research Question}

In the context of the aggressive playstyle in Flesh and Blood, how can we select a subset of available cards that maximizes the difference between the total damage dealt and the total pitch cost, respecting the constraints of available resources and cards in hand?

\begin{definition}
We define the \FAB problem as follows:

Given the set of available cards \( \mathcal{C} = \{ c_1, c_2, \dotsc, c_n \} \) and the total available pitch resources \( T \), select a subset \( A \subseteq \mathcal{C} \) maximizing the objective function, which is the difference between the total damage dealt and the penalty for the loss of defense when using cards for attack or pitch; that is:

\[
\text{Maximize } Z = \left( \sum_{i \in \mathcal{C}} a_i x_i \right) - \lambda \left( \sum_{i \in \mathcal{C}} d_i (x_i + y_i) \right)
\]

where $\lambda$  is the penalty factor associated with the loss of defensive capability.

\end{definition}

\begin{definition}
We define the \textsc{FAB-Aggro} problem as a restriction of the \FAB problem with \(\lambda = 0\), that is, optimizing purely for the maximum possible attack in a single turn.
\end{definition}

\begin{definition}
We define the \textsc{FAB-Midrange} problem as a restriction of the \FAB problem with \(\lambda = 1\), that is, optimizing the balance between attack in the current turn and defense in the next turn.
\end{definition}

Note that \textsc{FAB-Aggro} is the most restricted version, meaning the others are its generalizations. We will show that this problem is NP-hard because it models the 0-1 Knapsack problem, and therefore its variants are hard.

\section{Hardness Results}
\label{sec:hardness}

Having the above vocabulary defined, we may present our results regarding the computational complexity of the FAB problem. We prove its NP-hardness through a reduction from the 0-1 Knapsack problem to our \textsc{FAB-Aggro} problem. As noted above, this is the simplest form of the FAB problem. The analysis demonstrates how the core mechanics and decision-making challenges in Flesh and Blood can be mapped to a well-known combinatorial optimization problem. By formalizing this relationship, the subsection highlights the intrinsic difficulty of optimizing aggressive strategies in the game and provides a foundation for understanding the broader computational implications of strategic gameplay in similar contexts.

\begin{theorem}
The \textsc{FAB-Aggro} problem is NP-hard.
\end{theorem}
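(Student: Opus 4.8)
The plan is to exhibit a polynomial-time, value-preserving reduction from the optimization version of 0-1 Knapsack to \textsc{FAB-Aggro}; this suffices since 0-1 Knapsack is \classNPH{}. Recall that a Knapsack instance consists of items $1,\dots,m$ with nonnegative weights $w_j$ and values $v_j$ together with a capacity $W$, and asks for a subset maximizing $\sum v_j$ subject to $\sum w_j \le W$. The core observation is that under $\lambda = 0$ the only quantities that influence the objective $Z = \sum_{i} a_i x_i$ are the attack values and the feasibility constraint $\sum_i t_i x_i \le \sum_i r_i y_i$; the defense values $d_i$ and the variables $z_i$ drop out entirely. I would therefore identify the attack cards with Knapsack items, letting the attack value $a_i$ play the role of $v_i$ and the pitch cost $t_i$ play the role of the weight $w_i$, so that the resource constraint becomes the capacity bound.

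Concretely, given a Knapsack instance I would build a card set $\mathcal{C}$ as follows. For each item $j$ create a card $c_j$ with $a_j = v_j$, $t_j = w_j$, $r_j = 0$, and $d_j = 0$. Then add a single \emph{resource card} $c_0$ with $a_0 = 0$, $t_0 = 0$, $r_0 = W$, and $d_0 = 0$. The point of setting $r_j = 0$ on every item card is that $c_0$ becomes the \emph{only} source of pitch resources, so the budget available to pay for attacks is either $0$ or exactly $W$, according to whether $c_0$ is pitched. Since $a_0 = 0$, playing $c_0$ as an attack contributes nothing, and since $\lambda = 0$ keeping it for defense is no better; hence in any optimal solution that makes a profitable attack, $c_0$ is pitched and $R = W$. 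The constraint then reads $\sum_j w_j x_j \le W$ and the objective becomes $\sum_j v_j x_j$, which is precisely the Knapsack objective.

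For correctness I would argue both directions. Forward: any feasible Knapsack selection $S \subseteq \{1,\dots,m\}$ maps to the \textsc{FAB-Aggro} solution that pitches $c_0$, sets $x_j = 1$ for $j \in S$, and assigns every remaining card to defense; this is feasible because $\sum_{j \in S} t_j = \sum_{j \in S} w_j \le W = r_0$, and it attains objective $\sum_{j \in S} v_j$. Backward: from an optimal \textsc{FAB-Aggro} solution, the set of item cards with $x_j = 1$ is a Knapsack-feasible set of equal value, using that the total resource available is at most $W$ (contributed solely by $c_0$) and that $t_0 = 0$, so attacking $c_0$ never consumes budget and no card ever manufactures additional resources. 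The main obstacle is this backward direction: I must rule out that the solver gains an advantage through some configuration the Knapsack model cannot express, for instance by pitching item cards for extra capacity or by exploiting the $z_i$ assignment. Pinning $r_j = 0$ and $d_j = 0$ on item cards, together with $\lambda = 0$, is exactly what closes these loopholes, forcing the resource budget to be the fixed value $W$ and making defense assignments objective-neutral. The two optima then coincide, the construction is clearly polynomial, and the reduction is value-preserving, which establishes the theorem.
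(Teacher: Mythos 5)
Your reduction is correct and is essentially the same as the paper's: both map each Knapsack item to a card with $a_j = v_j$, $t_j = w_j$, $r_j = d_j = 0$, and introduce auxiliary resource-only cards (the paper's ``Energy Potion'' cards, your single card $c_0$ with $r_0 = W$) so that the available pitch equals the knapsack capacity. Your write-up is in fact somewhat more careful than the paper's in the backward direction, explicitly ruling out exploits via pitched item cards or defense assignments, but the underlying construction and argument are the same.
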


\begin{proof}
By reduction from the 0-1 Knapsack problem (\textsc{KP}). Consider a set of items \( I = \{ i_1, i_2, \dots, i_n \} \), where each item \( i_j \) has a benefit \( v_j \geq 0 \) when included in the knapsack and a weight \( w_j \geq 0 \) to be added to the knapsack. The knapsack has a maximum capacity \( W \geq 0 \). The \textsc{KP} has the following objective function:

\[
\text{Maximize } Z_{\text{KP}} = \sum_{i_j \in S} v_j
\]

Subject to the capacity constraint:

\[
\sum_{i_j \in S} w_j \leq W
\]

where \( S \subseteq I \) is the subset of items included in the knapsack. Also, the 0-1 Knapsack problem is know to be NP-Hard \cite{karp1972reducibility}

To reduce an instance of \textsc{KP} to the \textsc{FAB-Aggro} problem, we establish a correspondence between the items of \textsc{KP} and the cards of \textsc{FAB-Aggro}. For each item \( i_j \in I \), we create a card \( c_j \in \mathcal{C} \) with the following properties:

\begin{itemize}
    \item \textbf{Attack value}: \( a_j = v_j \).
    \item \textbf{Cost to play}: \( t_j = w_j \).
    \item \textbf{Pitch resource}: \( r_j = 0 \) (we assume no extra resources are generated).
    \item \textbf{Defense value}: \( d_j = 0 \) (this value is arbitrary, as defense is not considered in this variant of the problem).
\end{itemize}

In addition to the cards that correspond to knapsack items, we use special cards that represent the knapsack's capacity. These cards have no cost, value, or defense, and their sole purpose is to generate resources to allow the use of the cards.\footnote{There are item cards in Flesh and Blood; for this work, we use the "Energy Potion" card.}

In \textsc{KP}, the decision is whether to include an item \( i_j \) in the knapsack. In the \textsc{FAB-Aggro} problem, the decision is to use the card \( c_j \) for attack or pitch.

We establish the following correspondences:

\begin{itemize}
    \item If \( i_j \) is included in the knapsack (\( i_j \in S \)), then in \textsc{FAB-Aggro}:
    \begin{itemize}
        \item \( x_j = 1 \) (card used for attack).
        \item \( y_j = 0 \).
    \end{itemize}
    \item If \( i_j \) is not included in the knapsack (\( i_j \notin S \)), then in \textsc{FAB-Aggro}:
    \begin{itemize}
        \item \( x_j = 0 \), \( y_j = 0 \).
    \end{itemize}
\end{itemize}

\subsubsection*{Corresponding Objective Function}

Let \( x \) be a 0/1 vector indexed by the items; we denote \( w(x) = \sum_{i \in I} w_i x_i \) as the total weight of the chosen items. Similarly, let \( p(x) = \sum_{i \in I} t_i x_i \) be the total pitch cost of the chosen cards.

If \( (x^*, y^*) \) is an optimal solution of the \textsc{FAB-Aggro} instance, then

\begin{align*}
    \sum_{i \in I} x^*_i a_i
        &= \max_{r(x) \leq R} \sum_{i \in I} x_i a_i \\
        &= \max_{w(x) \leq W} \sum_{i \in I} x_i v_i \mbox{,}
\end{align*}

where the last equality holds because we constructed the \textsc{FAB-Aggro} instance such that \( R = W \) and, for all \( i \in I \), \( t_i = w_i \) and \( a_i = v_i \).

Therefore, \( x^* \) is an optimal solution to the \textsc{KP} instance.

\end{proof}

\begin{corollary}
\textsc{FAB-Midrange} and \FAB are NP-hard.
\end{corollary}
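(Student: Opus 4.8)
The plan is to reuse the reduction already built for \textsc{FAB-Aggro} rather than construct anything new, since the single observation that every card in that reduction has defense value zero collapses the entire $\lambda$-penalty term. I would treat the general \FAB problem and \textsc{FAB-Midrange} separately, however, because they relate to \textsc{FAB-Aggro} in genuinely different ways, and the informal ``all variants are generalizations'' slogan only applies cleanly to one of them.

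The robust, unified step is the defense-value observation. In the proof of the Theorem, the reduction sets $d_j = 0$ for every card $c_j$. Consequently the penalty term $\lambda\bigl(\sum_{i\in\mathcal{C}} d_i(x_i+y_i)\bigr)$ evaluates to zero on every instance in the image of that reduction, \emph{regardless of the value of} $\lambda$. On exactly these instances the objective $Z$ therefore reduces identically to $\sum_{i} a_i x_i$, so the \textsc{FAB-Midrange} objective ($\lambda=1$) coincides with the \textsc{FAB-Aggro} objective card-for-card. The same correspondence with \textsc{KP} used in the Theorem (with $R=W$, and $t_i=w_i$, $a_i=v_i$ for all $i$) then goes through verbatim, showing that an optimal \textsc{FAB-Midrange} solution recovers an optimal knapsack solution. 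The identical reasoning covers the general \FAB problem for any fixed $\lambda$.

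For the general \FAB problem there is also a shorter route worth recording: \textsc{FAB-Aggro} is by definition the subfamily of \FAB obtained by fixing $\lambda=0$, so any polynomial-time algorithm for \FAB (with $\lambda$ supplied in the input) would immediately solve every \textsc{FAB-Aggro} instance, and NP-hardness follows from the Theorem by this restriction argument alone.

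The one point to be careful about is precisely the gap this exposes: it is tempting to assert that \textsc{FAB-Midrange} is ``a generalization'' of \textsc{FAB-Aggro'' and dispose of it by the same one-line restriction argument, but fixing $\lambda=1$ is not a generalization of fixing $\lambda=0$, so that argument is technically false for \textsc{FAB-Midrange}. The honest fix is the $d_i=0$ observation above, which is routine once noticed, so I do not expect any real obstacle beyond stating it cleanly. A pleasant byproduct is that the argument establishes NP-hardness for every fixed penalty factor $\lambda$, not merely for $\lambda \in \{0,1\}$.
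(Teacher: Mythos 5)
Your proposal is correct and takes essentially the same approach as the paper: the paper also handles \FAB by the restriction argument (since \textsc{FAB-Aggro} is exactly the $\lambda = 0$ case) and handles \textsc{FAB-Midrange} by observing that any \textsc{FAB-Aggro} instance can be recast with all defense values equal to zero, which makes the two objective functions identical. Your explicit warning that the one-line restriction argument does not apply to \textsc{FAB-Midrange}, and your remark that the zero-defense argument yields hardness for every fixed $\lambda$, are sharpenings of the presentation rather than a different route.
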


\begin{proof}
The \FAB problem generalizes \textsc{FAB-Aggro} because, by definition, \textsc{FAB-Aggro} is restricted to \(\lambda = 0\).

The \textsc{FAB-Midrange} problem generalizes \textsc{FAB-Aggro} because any instance of \textsc{FAB-Aggro} can be modeled by an instance of \textsc{FAB-Midrange} where the cards have no defense, and therefore the objective functions are identical.
\end{proof}

\section{Conclusion}

The reduction of the problem of manage damage in a single turn in Flesh and Blood (FaB) to a optimization problem of 0-1 Knapsack Problem provides significant insights into the computational complexity inherent in the game. By establishing a correspondence between the game’s elements and the parameters of these well-known optimization problems, we have demonstrated that the decision-making process in FaB is NP-hard. 


Recognizing the NP-hardness of the problem opens avenues for developing heuristic or approximation algorithms tailored to FaB, which can provide near-optimal solutions within reasonable computational times, aiding players in strategy formulation. For game designers, this analysis can guide the creation of new cards and mechanics, as understanding how certain attributes contribute to computational complexity can inform balanced game development.

Looking ahead, several questions and possible areas for future work emerge. Investigating heuristic approaches specific to FaB may reveal how effective these methods are compared to optimal strategies. Exploring approximation algorithms that guarantee solutions within a certain percentage of the optimal damage output raises considerations about acceptable trade-offs between computational time and solution quality in a real-time gaming context. 
Additionally, it looks similar to other games in how the structure of prizes can be obtained, suggesting if the basis of other card games can be reduced to an optimized problem of the Knapsack problem and its variations. Thus, we present possible solutions of the problem using integer linear programming, since we can use a solver to resolve the problems.



\appendix

\section{Basic Rules of the Game Flesh and Blood} \label{rules}

Flesh and Blood is a collectible card game where players assume the role of heroes battling in intense combat. Below are some of the basic concepts and rules of the game:

\subsection{Basic Concepts}

\begin{itemize}
    \item \textbf{Hero}: Each player chooses a hero to represent in combat. The hero determines the initial amount of life points and the class of cards the player can use.
    \item \textbf{Cards}: There are various types of cards, including attacks, defenses, and resource (pitch) cards. Each card has a resource cost, a pitch value, and may have additional abilities.
    \item \textbf{Resources}: To play cards, players need to generate resources, usually by discarding cards from their hands to use their pitch value.
    \item \textbf{Attack and Defense}: Players alternate turns attacking and defending. The goal is to reduce the opponent's hero's life points to zero before their own.
    \item \textbf{Go Again}: An important mechanic that allows the player to perform multiple actions in the same turn.
\end{itemize}

\subsection{Turn Structure}

A turn in Flesh and Blood is composed of several phases:

\begin{enumerate}
    \item \textbf{Action Phase}: The active player can play attack cards, activate abilities, and perform actions. Some actions may have the "Go Again" ability, allowing for subsequent actions in the same turn.
    \item \textbf{Reaction Phase}: The defending player can respond to attacks with defense and reaction cards.
    \item \textbf{End Phase}: The turn ends, and the active player prepares for the next turn, drawing cards up to the hand limit.
\end{enumerate}

For more details and advanced rules, consult the comprehensive rules and the game guide:

\begin{itemize}
    \item \textbf{Game Guide}: For a general guide on how to play Flesh and Blood, visit the TCGPlayer website \cite{Zach2023} 
    \item \textbf{Comprehensive Rules}: For a complete understanding of the game rules, access the Comprehensive Rules available at 
    \cite{LegendStoryStudios2023}
\end{itemize}

\bibliographystyle{unsrt} 
\bibliography{references}  

\end{document}